\documentclass[12pt]{article}

\usepackage{amsmath}
\usepackage{amsfonts}
\usepackage{amssymb}
\usepackage{graphicx}
\usepackage[english]{babel}
\usepackage{amsthm}
\usepackage{booktabs}
\usepackage[margin=1in]{geometry}
\usepackage{tabto}
\usepackage{latexsym}
\usepackage{comment}
\usepackage{enumitem}
\usepackage{float}
\usepackage{caption}
\usepackage{subcaption}
\usepackage{multicol}
\usepackage[utf8]{inputenc}
\usepackage[english]{babel}
\usepackage{setspace}
\usepackage[toc,page]{appendix}
\usepackage{mathtools}
\usepackage{cite}
\usepackage{hyperref}
\usepackage{authblk}
\usepackage[utf8]{inputenc}
\hypersetup{
    colorlinks=true,
    linkcolor=blue,
    citecolor=blue,
    urlcolor=cyan
}

\newtheorem{lemma}{Lemma}
\newtheorem{proposition}{Proposition}

\newtheorem{construction}{Construction}
\newtheorem{conjecture}{Conjecture}

\graphicspath{ {./images/} }

\title{
    \vbox{\LARGE\bfseries On the Weight Spectrum of \\ Reed-Muller Codes $RM(7,14)$\\[0.3em]}
}
\author{Milo Leuenberger\thanks{Corresponding author: miloleuenberger@bluewin.ch}, \  Manuel Albrizzio\thanks{Corresponding author: ManuelAlbrizzio@ferris.edu}\\
    \small Dept. of Mathematics, Ferris State University, Michigan, USA
}

\date{\small\textit{\today}}
\begin{document}
\maketitle
\begin{abstract}
In this paper, we attempt to find the weight spectrum of the Reed-Muller codes $RM(m-7,m)$. In the process, we tried to fully determine the weight spectrum of $RM(7,14)$.  We found almost all of the weights, except for a select number of them. The remaining eight missing weights are necessary to give a positive answer to an open question on the weight 
spectrum of $RM(m-c,m)$ for $ c= 7$. 
\end{abstract}

\subsection*{1 Introduction}

Reed-Muller codes belong to the family of error-correcting codes. Each codeword can be defined in terms of a Boolean function. The Reed-Muller codes $RM(r,m)$ have algebraic degree of at most $r$ and length $2^m$. The letter $m$ stands for the number of variables $x_1,...,x_m$ which take the values 0 or 1. The function $f(x)=f(x_1,...,x_m)\in \mathbb{F}_2[x_1,...,x_m]$ is then called a Boolean function. The codewords of the Reed-Muller code $RM(r,m)$ are the truth tables of the set of all ($m$-variable) Boolean functions which are polynomials of degree at most $r$~\cite{krupka1977introduction}. 

Knowing the weights of these functions is important to determine the distribution of codewords and how codes perform in error correction and data transmission. This information is useful for improving the efficiency of the decoding algorithms and for evaluating their complexities\cite{carlet2021boolean}.\\

The foundation for later work on determining the weight distribution of Reed–Muller codes was laid by Kasami, Tokura, and Azumi.
They derived explicit formulas for codeword weights less than 2.5$d$, where $d$ is the minimum weight~\cite{kasami1976weight}. 

In 2024, Carlet determined the weight spectrum of $RM(m-5,m)$ using the Maiorana-McFarland construction~\cite{carlet2024weight}. Shortly after, Lou and Wang determined the weight spectrum of $RM(m-6,m)$ using a concatenation technique~\cite{lou2025determining}. 
In this paper, we are attempting to determine the weight spectrum of $RM(m-7,m)$. We use the concatenation technique from Lou and Wang~\cite{lou2025determining} and extend it to four-term Boolean functions. 

The structure of the paper is build up in the following way. In Section 2, we introduce the necessary notation and mathematical method. In Section 3, we determine the weight spectrum of $RM(7,14)$ through a series of lemmas and propositions that we combine. We also note the missing weights. Section 4 concludes the paper with some remarks.  

\subsection*{2 Preliminaries}
Boolean functions are a map from $\mathbb{F}^m_2$ into the binary field $\mathbb{F}_2$. They can be represented in algebraic normal form (ANF). Let $f$ be a Boolean function with $m$ variables. Then, there exists a unique multivariate polynomial in $\mathbb{F}_2[x_1,...,x_m]$ such that

$$f(x_1, \ldots, x_m) = \sum_{L \subseteq \{1,2,\ldots,m\}} a_L \prod_{l \in L} x_l.$$

Let $B_m$ be the set of all Boolean functions with $m$ variables. $RM(r, m)$ represents the Reed-Muller code of $r$-th order and length $2^m$, which corresponds to the truth table of Boolean functions of degree at most $r$ in $m$ variables. The cardinality of a set A is denoted as $|A|$ in this paper. We use the same definition of cardinality and Hamming weight as Lou and Wang did in~\cite{lou2025determining}. If
 $$1_f = \{x\in \mathbb{F}^m_2 \ |\ f(x) = 1\} \ \ \ \text{and} \ \ \  0_f = \{x \in \mathbb{F}^m_2 \ | \ f(x) = 0\},$$  
then $|1_f|$ is called the \textit{Hamming weight of $f$}. It will be denoted by $wt(f)$.
The weights of the Reed–Muller code $RM(r,m)$ are just the Hamming weights of all its codewords, meaning all functions $f \in B_m$ with degree at most $r$. The weight spectrum is simply the set of all different weights that appear.

In this paper, we apply the concatenation technique used by Lou and Wang~\cite{lou2025determining}. The concatenation of Boolean functions is denoted by $\parallel$ and works the following way.
For two functions $f_1, f_2 \in B_m$, their concatenation results in a function in $B_{m+1}$:
\begin{align*}
(f_1 \parallel f_2)(x_1, \dots, x_m, x_{m+1}) &= (x_{m+1} + 1)f_1(x_1, \dots, x_m) + x_{m+1}f_2(x_1, \dots, x_m).
\end{align*}
Similarly, the concatenation of four functions $f_1, f_2, f_3, f_4 \in B_m$ results in a function in $B_{m+2}$,
\begin{align*}
f_1 \parallel f_2 \parallel f_3 \parallel f_4 ={} &(x_{m+1} + 1)(x_{m+2} + 1)f_1 + x_{m+1}(x_{m+2} + 1)f_2 \\
& {} + (x_{m+1} + 1)x_{m+2}f_3 + x_{m+1}x_{m+2}f_4.
\end{align*}

\subsection*{3 Determining the Weight Spectrum of the Reed-Muller Codes $RM(7,14)$}

Following a similar technique to~\cite{lou2025determining}, we first construct functions in $B_{14}$. 

\begin{lemma}
    Let $ g = g_0||g_1||g_2||(g_1+g_2+g_3) $, where $g_0,g_3 \in RM(5,12)$ and $g_1,g_2 \in RM(6,12)$. Then $g \in RM(7,14)$.

   \begin{proof}
    We compute
    \begin{align*}
    \quad\
    g
    &= (1 + x_{13})(1 + x_{14}) g_0
        + x_{13}(1 + x_{14}) g_1
        + (1 + x_{13}) x_{14} g_2
        + x_{13} x_{14} (g_1 + g_2 + g_3)\\
    &= (1 + x_{13})(1 + x_{14}) g_0
        + x_{13} g_1
        + x_{13} x_{14} g_1
        + x_{14} g_2
        + x_{13} x_{14} g_2
        + x_{13} x_{14} g_1
        + x_{13} x_{14} g_2
        + x_{13} x_{14} g_3\\
    &= (1 + x_{13})(1 + x_{14}) g_0
        + x_{13} g_1
        + x_{14} g_2
        + x_{13} x_{14} g_3 .
    \end{align*}
    Clearly, $g \in RM(7,14)$.
    \end{proof}
\end{lemma}

\begin{construction}
    Let $g_1 = x_1x_2x_3x_4x_5x_6 + x_{i_1}x_{i_2}x_{i_3}x_{i_4}x_{i_5}x_{i_6}$ and 
    $g_2 = x_1x_2x_3x_4x_5x_6 + x_{i_7}x_{i_8}x_{i_9}x_{i_{10}}x_{i_{11}}x_{i_{12}}$, 
    where $g_1,g_2 \in B_{12}$ and $1\leq i_1,...,i_{12}\leq 12$. We then construct $g=0||g_1||g_2|(g_1+g_2)$. 
\end{construction}
By Lemma 1, $g \in RM(7,14)$. In Construction 1, $g_1$, $g_2$, and $g_1+g_2$ all consist of two monomials. The following Lemma determines the weights of such functions. 

\begin{lemma}
    Let $f = x_{i_1}x_{i_2}x_{i_3}x_{i_4}x_{i_5}x_{i_6}
    + x_{i_7}x_{i_8}x_{i_9}x_{i_{10}}x_{i_{11}}x_{i_{12}} \in B_{12}.$ Let $I = \{i_1,i_2,i_3,i_4,i_5, i_6\}$ and $J = \{i_7, i_8, i_9, i_{10}, i_{11}, i_{12}\}.$ If $|I \cap J| = c$. Then, $$wt(f)=128-2^{c+1}.$$
    \begin{proof}
        Let $f_1 = x_{i_1}x_{i_2}x_{i_3}x_{i_4}x_{i_5}x_{i_6}$ and $f_2 = x_{i_7}x_{i_8}x_{i_9}x_{i_{10}}x_{i_{11}}x_{i_{12}}.$ Then,
        \begin{align*}
            \quad\
            wt(f)
            & = |1_{f_1}\cap0_{f_2}| + |0_{f_1}\cap1_{f_2}|\\
            & = 2^{|I\cap J|}(2^{6-|I\cap J|}-1) + 2^{|I\cap J|}(2^{6-|I\cap J|}-1)\\
            & = 2^c(2^{6-c}-1) + 2^c(2^{6-c}-1) \\
            & = 128 - 2^{c+1}.  
        \end{align*}
    \end{proof}
\end{lemma}
By Lemma 2, iterating through values for c, the set of all weights for functions of the form $f = x_{i_1}x_{i_2}x_{i_3}x_{i_4}x_{i_5}x_{i_6}
    + x_{i_7}x_{i_8}x_{i_9}x_{i_{10}}x_{i_{11}}x_{i_{12}} \in B_{12}$ is $\{0,64,96,112,120,124,126\}.$

\begin{lemma}
    The following weights, congruent with 2 modulo 4, can be obtained from Construction 1,   
    $$\{314, 342, 350\}.$$
     \begin{proof}
        Let $g = 0||g_1||g_2||(g_1+g_2)$, where $g_1 = x_1x_2x_3x_4x_5x_6+x_1x_2x_3x_4x_5x_7$ and $g_2=x_1x_2x_3x_4x_5x_6+x_6x_8x_9x_{10}x_{11}x_{12}$.
        By Lemma 1, $g \in RM(7,14).$ Also, $g_1 + g_2 = x_1x_2x_3x_4x_5x_7 + x_6x_8x_9x_{10}x_{11}x_{12}$. By Lemma 2, $wt(g_1)=128-2^6=64$, $wt(g_2) = 128-2^2 =124,$ and $wt(g_1+g_2)=128-2^1=126$. Thus, $wt(g) = 64+124+126 = 314.$
        \\
        \\
        Similarly, take $g_1 = x_1x_2x_3x_4x_5x_6+x_1x_2x_3x_4x_7x_8$ and $g_2=x_1x_2x_3x_4x_5x_6+x_5x_6x_9x_{10}x_{11}x_{12}$.
        By Lemma 2, $wt(g_1)=128-2^5=96$, $wt(g_2) = 128-2^3 =120,$ and $wt(g_1+g_2)=128-2^1=126$. Hence, $wt(g) = 96+120+126 = 342.$
        \\
        \\
        Finally, take $g_1 = x_1x_2x_3x_4x_5x_6+x_1x_2x_3x_7x_8x_9$ and $g_2=x_1x_2x_3x_4x_5x_6+x_4x_5x_6x_{10}x_{11}x_{12}$.
        By Lemma 2, $wt(g_1)=128-2^4=112$, $wt(g_2) = 128-2^4 =112,$ and $wt(g_1+g_2)=128-2^1=126$. Thus, $wt(g) = 112+112+126 = 350.$
     \end{proof}
\end{lemma}

Construction 1 can be extended to two three monomial functions.
\begin{construction} 
Let $g_1 = x_{1}x_{2}x_{3}x_{4}x_{5}x_{6}
    +x_{i_1}x_{i_2}x_{i_3}x_{i_4}x_{i_5}x_{i_6}
    + x_{i_7}x_{i_8}x_{i_9}x_{i_{10}}x_{i_{11}}x_{i_{12}}$ and $g_2 = x_{1}x_{2}x_{3}x_{4}x_{5}x_{6}
    +x_{i_1}x_{i_2}x_{i_3}x_{i_4}x_{i_5}x_{i_6}
    + x_{i_{13}}x_{i_{14}}x_{i_{15}}x_{i_{16}}x_{i_{17}}x_{i_{18}}$, where $g_1, g_2 \in B_{12}$ and $1\leq i_1,...,i_{18}\leq 12$. We then construct $g=0||g_1||g_2|(g_1+g_2)$.
\end{construction}
$g_1$ and $g_2$ in Construction 2 consist of three monomials, and $g_1+g_2$ consists of two monomials. By Lemma 1, $g \in RM(7,14)$. Let $J = \{i_7,i_8,i_9,i_{10},i_{11}, i_{12}\}$, 
and $K = \{i_{13}, i_{14}, i_{15}, i_{16}, i_{17}, i_{18}\}.$ If $|J\cap K| = c$, then by Lemma 2, $wt(g_1 + g_2) = 128-2^{c+1}$. The following Lemma is going to determine the weights of functions with three monomials.

\begin{lemma}
    Let $f = x_{1}x_{2}x_{3}x_{4}x_{5}x_{6}
    +x_{i_1}x_{i_2}x_{i_3}x_{i_4}x_{i_5}x_{i_6}
    + x_{i_7}x_{i_8}x_{i_9}x_{i_{10}}x_{i_{11}}x_{i_{12}}
    \in B_{12}.$ Let $I = \{1,2,3,4,5,6\},I = \{i_1,i_2,i_3,i_4,i_5, i_6\}$, 
    and $K = \{i_7, i_8, i_9, i_{10}, i_{11}, i_{12}\}.$ 
    If $|I \cap J| = c_1, |I\cap K| = c_2, |J\cap K| = c_3, and \, |I\cap J \cap K|= c_4$. Then, \\ 
    $$\quad wt(f)=192+2^{c_1+c_2+c_3-c_4-4}-2^{c_1+1}-2^{c_2+1}-2^{c_3+1}$$
    \begin{proof}
        Let
        $f_1 = x_{1}x_{2}x_{3}x_{4}x_{5}x_{i_6}, f_2 = x_{i_1}x_{i_2}x_{i_3}x_{i_4}x_{i_5}x_{i_6}$, and $f_3 = x_{i_7}x_{i_8}x_{i_9}x_{i_{10}}x_{i_{11}}x_{i_{12}}.$ 
        Then,
            $$ wt(f) = |1_{f_1}\cap1_{f_2}\cap 1_{f_3}| + |1_{f_1}\cap0_{f_2}\cap 0_{f_3}|+ |0_{f_1}\cap1_{f_2}\cap0_{f_3}| + |0_{f_1}\cap0_{f_2}\cap 1_{f_3}|.$$
        For the first term, by the principle of inclusion-exclusion $|I\cup J \cup K|=18-c_1-c_2-c_3+c_4$. Thus,
        \begin{align*}
            |1_{f_1}\cap1_{f_2}\cap 1_{f_3}|
            & = 2^{12-|I\cup J \cup K|}\\
            & = 2^{c_1+c_2+c_3-c_4-6}  
        \end{align*}
        For the second term, $|1_{f_1}\cap0_{f_2}|=|1_{f_1}\cap0_{f_2}\cap 0_{f_3}|+ |1_{f_1}\cap0_{f_2}\cap1_{f_3}|$. Therefore, 
        \begin{align*}
            \quad\
            |1_{f_1}\cap0_{f_2}\cap 0_{f_3}|
            & = |1_{f_1}\cap0_{f_2}|-|1_{f_1}\cap0_{f_2}\cap1_{f_3}|\\
            & = 64-2^{c_1}-|1_{f_1}\cap0_{f_2}\cap1_{f_3}|, \ \text{by Lemma 2}\\
            & = 64-2^{c_1} - (2^{6-c_1-c_3+c_4}-1)2^{c_1+c_2+c_3-c_4-6}\\
            & = 64-2^{c_1} - 2^{c_2}+2^{c_1+c_2+c_3-c_4-6}.
        \end{align*}
        Similarly,
            $$ |0_{f_1}\cap1_{f_2}\cap 0_{f_3}|= 64-2^{c_1} - 2^{c_3}+2^{c_1+c_2+c_3-c_4-6}$$
            $$ |0_{f_1}\cap0_{f_2}\cap 1_{f_3}|= 64-2^{c_2} - 2^{c_3}+2^{c_1+c_2+c_3-c_4-6}.$$ 
        Therefore, 
            \begin{align*}
                \quad\
                wt(f)
                & = 3(64)-2(2^{c_1}+2^{c_2}+2^{c_3})+4(2^{c_1+c_2+c_3-c_4-6})\\
                & = 192+2^{c_1+c_2+c_3-c_4-4}-2^{c_1+1}-2^{c_2+1}-2^{c_3+1}. 
            \end{align*}
    \end{proof}
\end{lemma}
By Lemma 4, iterating through all possible polynomials with 3 monomials, the set of all weights for functions of the form $x_{1}x_{2}x_{3}x_{4}x_{5}x_{6}
+x_{i_1}x_{i_2}x_{i_3}x_{i_4}x_{i_5}x_{i_6}
 + x_{i_7}x_{i_8}x_{i_9}x_{i_{10}}x_{i_{11}}x_{i_{12}}\in B_{12}$ is \\
$\{64,96,112,120,124,126,128,136,144,148,152,154,156,160,162,164,168,172,176\}.$\\
We now have a formula to compute the weight of functions with 3 monomials. Using Construction 2, we can achieve a weight of 254.

\begin{lemma}
    Let $g = 0||g_1||g_2||(g_1+g_2)$, where $g_1 = x_1x_2x_3x_4x_5x_6+x_1x_7x_8x_9x_{10}x_{11}+x_1x_7x_8x_9x_{10}x_{11}$ and $g_2=x_1x_2x_3x_4x_5x_6+x_1x_7x_8x_9x_{10}x_{11}+x_7x_8x_9x_{10}x_{11}x_{12}$. Then $g\in RM(7,14)$ and $wt(g)=254$.
     \begin{proof}
         By Lemma 1, $g \in RM(7,14).$ Also, $g_1 + g_2 = x_1x_7x_8x_9x_{10}x_{11} + x_7x_8x_9x_{10}x_{11}x_{12}$. By Lemma 2, $wt(g_1+g_2)=128-2^6=64$. By Lemma 4,  $wt(g_1)=192+2^3-2^2-2^2-2^7=64,$ and
         $wt(g_2) = 192+2^2-2^2-2^1-2^6=126$. Thus, $wt(g) = 64+126+64 = 254.$
     \end{proof}
\end{lemma}
Following the example in Lemma 5, we compute the weights of all possible constructions that are allowed within the constraints of Construction 2. We achieved this by implementing a computer algorithm, which iterates through all allowed combinations of functions and calculates the associated weights. 

\begin{lemma} 
The set of all new weights (2 mod 4) that can be created by Construction 2 is 
\[
\begin{aligned}
\{ &254, 318, 338, 346, 358, 366, 370, 374, 378, 386, 390, 394, 398, \\
   &402, 406, 410, 414, 418, 422, 426, 430, 434, 438, 442, 446, 450\}.
\end{aligned}
\]
\end{lemma}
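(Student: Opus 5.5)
The lemma is a finite, essentially computational claim, so the plan is to reduce it to an explicit enumeration and, where possible, exhibit a witness for each listed weight in the style of Lemma 5. By Lemma 1, every $g=0\|g_1\|g_2\|(g_1+g_2)$ from Construction 2 lies in $RM(7,14)$. Its weight splits over the four quarters as
\[
wt(g)=wt(g_1)+wt(g_2)+wt(g_1+g_2).
\]
Write $I=\{1,\dots,6\}$, $J_1=\{i_1,\dots,i_6\}$, $J=\{i_7,\dots,i_{12}\}$ and $K=\{i_{13},\dots,i_{18}\}$. Lemma 4 gives $wt(g_1)$ from $|I\cap J_1|$, $|I\cap J|$, $|J_1\cap J|$ and $|I\cap J_1\cap J|$. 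It gives $wt(g_2)$ from the same data with $K$ in place of $J$. Lemma 2 gives $wt(g_1+g_2)=128-2^{|J\cap K|+1}$. Hence $wt(g)$ depends only on the intersection pattern of the four $6$-subsets $I,J_1,J,K$ of $\{1,\dots,12\}$.

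First I would enumerate the realizable intersection patterns. It is enough to record the $2^4=16$ Venn-region sizes $n_S\ge 0$, $S\subseteq\{I,J_1,J,K\}$, subject to $\sum_S n_S=12$ and each set having total size $6$. Every such vector is realized by some choice of indices, so there is no realizability issue beyond these linear constraints. For each pattern I would evaluate the three formulas, keep the sums $\equiv 2 \pmod 4$, and discard the values already produced in Lemma 3, namely $314,342,350$, together with any other weight the paper had already shown.

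Degenerate cases need care. If monomials coincide, for example $J_1=J$ as in Lemma 5, or $J=I$, the three-term function collapses. Lemma 4 must then be checked against the actual function, or the weight computed directly, since cancellations mod $2$ change the support. The enumeration should therefore compute weights from the reduced polynomial rather than blindly applying the formula. The safest route is direct truth-table evaluation on $2^{12}$ points for one representative per pattern.

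For the written proof I would list, for each of the $26$ weights, an explicit pair $(g_1,g_2)$ together with its three component weights, exactly as in Lemma 5. Each weight is then a short verification via Lemmas 2 and 4. For example, $254=64+126+64$ is already given.

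The main obstacle is the ``set of all'' claim, namely that no other new $2 \bmod 4$ weight arises. This exhaustiveness rests on the completeness of the pattern enumeration and on the degenerate cases being handled correctly. I would justify it by the finiteness argument above, reducing to the Venn-vector enumeration, and cross-check by brute-force computation of the $2^{14}$-point weights for sampled representatives.
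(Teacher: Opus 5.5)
Your proposal matches the paper's argument, which is likewise an exhaustive computer enumeration of Construction 2 using $wt(g)=wt(g_1)+wt(g_2)+wt(g_1+g_2)$ with Lemmas 2 and 4, keeping the values $\equiv 2 \pmod 4$ and discarding only the Lemma 3 weights $314,342,350$ (so $254$ stays). Your Venn-region reduction is just a more economical way to organize that search, and your extra care for coinciding monomials is unnecessary: Lemma 4 counts, by inclusion--exclusion, the points where an odd number of the monomials equal $1$, so it stays valid when monomials coincide, as Lemma 5 already uses.
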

More weights can be obtained from Construction 2 by flipping bits. This is discussed in the following propositions.

\begin{proposition}
The set of the
weights of functions in $RM(7,14)$ of the form $0||(g_1 +a_1)||(g_2)||(g_1 + g_2 +a_2)$, where $a_1, a_2 \in \mathbb{F}_2$, can generate the following set of weights congruent with 2 modulo 4. 
$$S_{2mod4, 2}=\{4238 + 4j \ | \ 0\leq j \leq 15, j \in \mathbb{Z}\} \cup \{4358,4362\} $$
\begin{proof}
    Let $g_1 = x_1x_2x_3x_4x_5x_6 + x_1x_2x_3x_7x_8x_{9} + f$ and $g_2 = x_1x_2x_3x_4x_5x_6 + x_1x_2x_3x_7x_8x_{9} + h$, where $g_1,g_2 \in B_{12}$ and $f$, $h$ are monomials of degree 6.

\begin{table}[H]
\centering
\label{tab:a1_constructions}
\begin{tabular}{cccccc}
\toprule
$wt(g)$ & $f$ & $h$ & $wt(g_1 + 1)$ & $wt(g_2)$ & $wt(g_1 + g_2)$ \\
\midrule
4242 & $x_1x_2x_3x_4x_{10}x_{11}$ & $x_7x_8x_9x_{10}x_{11}x_{12}$ & 3960 & 162 & 120 \\
4254 & $x_1x_2x_3x_4x_{10}x_{11}$ & $x_5x_6x_7x_8x_9x_{12}$ & 3960 & 168 & 126 \\
4258 & $x_1x_2x_3x_4x_5x_{10}$ & $x_4x_5x_6x_{10}x_{11}x_{12}$ & 3984 & 162 & 112 \\
4262 & $x_1x_2x_3x_4x_7x_{10}$ & $x_5x_6x_8x_9x_{11}x_{12}$ & 3968 & 168 & 126 \\
4266 & $x_1x_2x_3x_4x_5x_7$ & $x_4x_5x_6x_{10}x_{11}x_{12}$ & 3984 & 162 & 120 \\
4270 & $x_1x_2x_3x_4x_5x_7$ & $x_7x_8x_9x_{10}x_{11}x_{12}$ & 3984 & 162 & 124 \\
4274 & $x_1x_2x_3x_4x_5x_{10}$ & $x_6x_7x_8x_9x_{11}x_{12}$ & 3984 & 164 & 126 \\
4278 & $x_1x_2x_3x_4x_5x_7$ & $x_6x_8x_9x_{10}x_{11}x_{12}$ & 3984 & 168 & 126 \\
\bottomrule
\end{tabular}
\caption{Constructions for $S_{2\text{mod}4, 2}$ with $a_1=1, a_2=0$}
\end{table}

\begin{table}[H]
\centering
\label{tab:a3_constructions}
\begin{tabular}{cccccc}
\toprule
$wt(g)$ & $f$ & $h$ & $wt(g_1)$ & $wt(g_2)$ & $wt(g_1 + g_2 + 1)$ \\
\midrule
4238 & $x_1x_2x_4x_5x_6x_{10}$ & $x_3x_7x_8x_9x_{11}x_{12}$ & 120 & 148 & 3970 \\
4246 & $x_1x_2x_3x_4x_5x_7$ & $x_7x_8x_9x_{10}x_{11}x_{12}$ & 112 & 162 & 3972 \\
4250 & $x_1x_2x_3x_4x_5x_7$ & $x_4x_5x_6x_{10}x_{11}x_{12}$ & 112 & 162 & 3976 \\
4282 & $x_1x_2x_3x_4x_{10}x_{11}$ & $x_4x_5x_6x_{10}x_{11}x_{12}$ & 136 & 162 & 3984 \\
4286 & $x_1x_2x_4x_{10}x_{11}x_{12}$ & $x_3x_5x_6x_7x_8x_9$ & 156 & 160 & 3970 \\
4290 & $x_1x_2x_4x_5x_7x_{10}$ & $x_4x_5x_6x_{10}x_{11}x_{12}$ & 144 & 162 & 3984 \\
4294 & $x_1x_2x_3x_{10}x_{11}x_{12}$ & $x_4x_5x_6x_7x_8x_9$ & 148 & 176 & 3970 \\
4298 & $x_1x_2x_4x_7x_{10}x_{11}$ & $x_4x_5x_6x_{10}x_{11}x_{12}$ & 152 & 162 & 3984 \\
4358 & $x_4x_5x_6x_7x_{10}x_{11}$ & $x_4x_5x_6x_{10}x_{11}x_{12}$ & 164 & 162 & 4032 \\
4362 & $x_4x_5x_6x_{10}x_{11}x_{12}$ & $x_4x_5x_7x_{10}x_{11}x_{12}$ & 162 & 168 & 4032 \\
\bottomrule
\end{tabular}
\caption{Constructions for $S_{2\text{mod}4, 2}$ with $a_1=0, a_2=1$}
\end{table}

Choosing suitable variables for $f$ and $h$ and calculating $wt(g) = wt(g_1+1) + wt(g_2) + wt(g_1 + g_2)$ by Lemma 2 and Lemma 4, the construction can generate the weights in Table 1. 
Additionally, by computing $wt(g) = wt(g_1) + wt(g_2) + wt(g_1 + 1)$ with Lemma 2 and Lemma 4, the weights in Table 2 can be constructed. 
\end{proof}
\end{proposition}
Based on Construction 2, we can also generate weight congruent with 0 modulo 4.

\begin{proposition}
The 14-variable Boolean functions of the form $0||(g_1)||(g_2)||(g_1+g_2+1)$ can produce the weights $4372$ and $4376$. 

\begin{proof}
     Let $g_1 = x_1x_2x_3x_4x_5x_6 + x_1x_2x_7x_8x_9x_{10} + x_3x_4x_7x_8x_{11}x_{12}$ and $g_2 = x_1x_2x_3x_4x_5x_6 + x_1x_2x_7x_8x_9x_{10} + h$, where $g_1,g_2 \in B_{12}$ and $h$ is a monomial of degree 6.
     \\
     By Lemma 4, $wt(g_1) = 172$. \\
     
     Take $h = x_3x_4x_7x_8x_{10}x_{12} $, then $wt(g_2) = 168$ and by Lemma 2 $wt(g_1 + g_2) = 64$. Hence,
        $$wt(0||g_1||g_2||(g_1 + g_2 + 1)) = 172 + 168 + 4096- 64 = 4372. $$

    If $h = x_3x_4x_7x_9x_{11}x_{12} $, then $wt(g_2) = 172$ and $wt(g_1 + g_2) = 64$. Thus, 
    $$wt(0||g_1||g_2||(g_1 + g_2 + 1)) = 172 + 172 + 4096- 64 = 4376. $$
\end{proof}
\end{proposition}

\begin{construction}
    Let $g_1 = x_{1}x_{2}x_{3}x_{4}x_{5}x_{6}
    +x_{i_1}x_{i_2}x_{i_3}x_{i_4}x_{i_5}x_{i_6}
    + x_{i_7}x_{i_8}x_{i_9}x_{i_{10}}x_{i_{11}}x_{i_{12}}$ and $g_2 = x_{1}x_{2}x_{3}x_{4}x_{5}x_{6}
    +x_{i_{13}}x_{i_{14}}x_{i_{15}}x_{i_{16}}x_{i_{17}}x_{i_{18}}
    + x_{i_{19}}x_{i_{20}}x_{i_{21}}x_{i_{22}}x_{i_{23}}x_{i_{24}}$, where $g_1, g_2 \in B_{12}$ and $1\leq i_1,...,i_{24}\leq 12$. We then construct $g=0||g_1||g_2|(g_1+g_2)$.
\end{construction} 
Thus, $g_1$ and $g_2$ in Construction 3 consist of three monomials and $g_1+g_2$ consists of four monomials.
The weights of $g_1$ and $g_2$ are determined by Lemma 4. The following Lemma is going to determine the weights of functions with four monomials. 

\begin{lemma}
Let $f= x_1x_2\dots x_6+x_{i_1}x_{i_2}\dots x_{i_6}+x_{i_7}x_{i_8}\dots x_{i_{12}}+x_{i_{13}}x_{i_{14}}\dots x_{i_{18}}$ and 
$I=\{1,2,\dots,6\}$, $J= \{i_1,i_2,\dots,\i_6\}$, $K=\{i_7, i_8, \dots, i_{12} \}$, and $L=\{ i_{13}, i_{14}, \dots, i_{18} \}$.
If $|I\cap J|=c_1$, $|I\cap K|=c_2$, $|I\cap L|=c_3$, $|J\cap K|=c_4$, $|J\cap L|=c_5$, $|K\cap L|=c_6$, $|I\cap J \cap K|=c_7$, $|I\cap J \cap L|=c_8$, $|I\cap K \cap L|=c_9$, $|J\cap K \cap L|=c_{10}$, and $|I\cap J \cap K\cap L|=c_{11}$, then
\[
\begin{aligned}
wt(f) = 256 &- 2^{c_1+1} - 2^{c_2+1} - 2^{c_3+1} - 2^{c_4+1} - 2^{c_5+1} - 2^{c_6+1} \\
&+ 2^{c_1+c_2+c_4-c_7-4} + 2^{c_1+c_3+c_5-c_8-4} + 2^{c_2+c_3+c_6-c_9-4} + 2^{c_4+c_5+c_6-c_{10}-4} \\
&- 2^{c_1+c_2+c_3+c_4+c_5+c_6-c_7-c_8-c_9-c_{10}+c_{11}-9}.
\end{aligned}
\]

\begin{proof}
Let $f_I= x_1x_2\dots x_6$, $f_J= x_{i_1}x_{i_2}\dots x_{i_6}$, $f_K=x_{i_7}x_{i_8}\dots x_{i_{12}}$ and $f_L= x_{i_{13}}x_{i_{14}}\dots x_{i_{18}}$.
Then 
\begin{align*}
wt(f) = {} & |1_{f_I}\cap 0_{f_J}\cap 0_{f_K}\cap 0_{f_L}|+|0_{f_I}\cap 1_{f_J}\cap 0_{f_K}\cap 0_{f_L}| \\
                  & +|0_{f_I}\cap 0_{f_J}\cap 1_{f_K}\cap 0_{f_L}|+|0_{f_I}\cap 0_{f_J}\cap 0_{f_K}\cap 1_{f_L}| \\
                  & +|1_{f_I}\cap 1_{f_J}\cap 1_{f_K}\cap 0_{f_L}|+|1_{f_I}\cap 1_{f_J}\cap 0_{f_K}\cap 1_{f_L}| \\
                  & +|1_{f_I}\cap 0_{f_J}\cap 1_{f_K}\cap 1_{f_L}|+|0_{f_I}\cap 1_{f_J}\cap 1_{f_K}\cap 1_{f_L}|.
\end{align*}

By the principle of inclusion-exclusion for 4 sets, $$|I\cup J \cup K \cup L|= 24 - c_1 - c_2 - c_3 - c_4 - c_5 - c_6 + c_7 + c_8 + c_9 + c_{10} - c_{11}.$$

For the fourth term, 
\begin{align*}
    |1_{f_I}\cap 1_{f_J}\cap 1_{f_K}\cap 0_{f_L}|
    & = (2^{|L-I-J-K|}-1) 2^{12-|I \cup J \cup K \cup L|}\\
    & = (2^{6-c_3 - c_5 - c_6 + c_8 + c_9 + c_{10} - c_{11}}-1) 2^{c_1+c_2+c_3+c_4+c_5+c_6-c_7-c_8-c_9-c_{10}+c_{11}-12}\\
    & = 2^{c_1+c_2+c_4-c_7-6}-2^{c_1+c_2+c_3+c_4+c_5+c_6-c_7-c_8-c_9-c_{10}+c_{11}-12}
 \end{align*}

Similarly, 
 \begin{align*}
|1_{f_I}\cap 1_{f_J}\cap 0_{f_K}\cap 1_{f_L}| &= 2^{c_1+c_3+c_5-c_8-6} - 2^{c_1+c_2+c_3+c_4+c_5+c_6-c_7-c_8-c_9-c_{10}+c_{11}-12} \\
|1_{f_I}\cap 0_{f_J}\cap 1_{f_K}\cap 1_{f_L}| &= 2^{c_2+c_3+c_6-c_9-6} - 2^{c_1+c_2+c_3+c_4+c_5+c_6-c_7-c_8-c_9-c_{10}+c_{11}-12} \\
|0_{f_I}\cap 1_{f_J}\cap 1_{f_K}\cap 1_{f_L}| &= 2^{c_4+c_5+c_6-c_{10}-6} - 2^{c_1+c_2+c_3+c_4+c_5+c_6-c_7-c_8-c_9-c_{10}+c_{11}-12}
\end{align*}
 
For the first term,
    $$|1_{f_I}\cap 0_{f_J}\cap 0_{f_K}\cap 0_{f_L}|= |1_{f_I}\cap0_{f_J}\cap 0_{f_K}|-|1_{f_I}\cap 0_{f_J}\cap 0_{f_K}\cap 1_{f_L}|$$

Furthermore,
    $$|1_{f_I}\cap 0_{f_J}\cap 0_{f_K}\cap 1_{f_L}|= |1_{f_I}\cap0_{f_J}\cap 1_{f_L}|-|1_{f_I}\cap 0_{f_J}\cap 1_{f_K}\cap 1_{f_L}|$$

Therefore, 
    $$|1_{f_I}\cap 0_{f_J}\cap 0_{f_K}\cap 0_{f_L}|= |1_{f_I}\cap0_{f_J}\cap 0_{f_K}|-|1_{f_I}\cap0_{f_J}\cap 1_{f_L}|+|1_{f_I}\cap 0_{f_J}\cap 1_{f_K}\cap 1_{f_L}|$$
\[
\begin{aligned}
|1_{f_I} \cap 0_{f_J} \cap 0_{f_K}| &= 64 - 2^{c_1} - 2^{c_2} + 2^{c_1 + c_2 + c_4 - c_7 - 6}, \ \text{by Lemma 4.} \\
|1_{f_I} \cap 0_{f_J} \cap 1_{f_L}| &= 2^{c_3} - 2^{c_1 + c_3 + c_5 - c_8 - 6}
\end{aligned}
\]

Thus,
\begin{align*}
    |1_{f_I}\cap 0_{f_J}\cap 0_{f_K}\cap 0_{f_L}|
    & = 64 -2^{c_1} - 2^{c_2} -2^{c_3}\\ 
    & \ \ \ \  + 2^{c_1+c_2+c_4-c_7-6} + 2^{c_1+c_3+c_5-c_8 -6} + 2^{c_2+c_3+c_6-c_9-6}\\
    & \ \ \ \  - 2^{c_1+c_2+c_3+c_4+c_5+c_6-c_7-c_8-c_9-c_{10}+c_{11}-12}
\end{align*}

Similarly,
\begin{align*}
    |0_{f_I}\cap 1_{f_J}\cap 0_{f_K}\cap 0_{f_L}| 
    & = 64 -2^{c_1} - 2^{c_4} -2^{c_5}\\ 
    & \ \ \ \  + 2^{c_1+c_2+c_4-c_7-6} + 2^{c_4+c_5+c_6-c_{10} -6} + 2^{c_1+c_3+c_5-c_8-6}\\
    & \ \ \ \  - 2^{c_1+c_2+c_3+c_4+c_5+c_6-c_7-c_8-c_9-c_{10}+c_{11}-12}
\end{align*}
\begin{align*}
    |0_{f_I}\cap 0_{f_J}\cap 1_{f_K}\cap 0_{f_L}|  
    & = 64 -2^{c_2} - 2^{c_4} -2^{c_6}\\ 
    & \ \ \ \  + 2^{c_1+c_2+c_4-c_7-6} + 2^{c_2+c_3+c_6-c_9 -6} + 2^{c_4+c_5+c_6-c_{10}-6}\\
    & \ \ \ \  - 2^{c_1+c_2+c_3+c_4+c_5+c_6-c_7-c_8-c_9-c_{10}+c_{11}-12}
\end{align*}
\begin{align*}
    |0_{f_I}\cap 0_{f_J}\cap 0_{f_K}\cap 1_{f_L}|  
    & = 64 -2^{c_3} - 2^{c_5} -2^{c_6}\\ 
    & \ \ \ \  + 2^{c_4+c_5+c_6-c_{10}-6} + 2^{c_2+c_3+c_6-c_9 -6} + 2^{c_1+c_3+c_5-c_8-6}\\
    & \ \ \ \  - 2^{c_1+c_2+c_3+c_4+c_5+c_6-c_7-c_8-c_9-c_{10}+c_{11}-12}
\end{align*}

Putting it all together, we get
\[
\begin{aligned}
wt(f) = 256 &- 2^{c_1+1} - 2^{c_2+1} - 2^{c_3+1} - 2^{c_4+1} - 2^{c_5+1} - 2^{c_6+1} \\
&+ 2^{c_1+c_2+c_4-c_7-4} + 2^{c_1+c_3+c_5-c_8-4} + 2^{c_2+c_3+c_6-c_9-4} + 2^{c_4+c_5+c_6-c_{10}-4} \\
&- 2^{c_1+c_2+c_3+c_4+c_5+c_6-c_7-c_8-c_9-c_{10}+c_{11}-9}.
\end{aligned}
\]
\end{proof} 
\end{lemma}
  
By computing values with the formula in Lemma 7 for all allowed combinations of c's, the set of all weights for functions of the form $x_1x_2\dots x_6+x_{i_1}x_{i_2}\dots x_{i_6}+x_{i_7}x_{i_8}\dots x_{i_{12}}+x_{i_{13}}x_{i_{14}}\dots x_{i_{18}}\in B_{12}$ is,
\[
\begin{aligned}
\{ &0, 64, 96, 112, 120, 124, 126, 128, 136, 144, 148, 152, 154, 156, 158, 160, 162, 164, 166, 168, 172, \\
   &174, 176, 180, 182, 184, 186, 188, 190, 192, 194, 196, 198, 200, 202, 204, 208, 210, 212, 216, 220\}.
\end{aligned}
\]

\begin{lemma}
    Let $g = 0||g_1||g_2||(g_1+g_2)$, where $g_1 = x_1x_2x_3x_4x_5x_6+x_1x_2x_3x_4x_{7}x_{8}+x_1x_2x_3x_7x_{8}x_{9}$ and $g_2=x_1x_2x_3x_4x_5x_6+x_5x_6x_9x_{10}x_{11}x_{12}+x_5x_7x_9x_{10}x_{11}x_{12}$. Then $g\in RM(7,14)$ and $wt(g)=362$.
     \begin{proof}
         By Lemma 1, $g \in RM(7,14).$ Also, $g_1 + g_2 = x_1x_2x_3x_4x_{7}x_{8}+x_1x_2x_3x_7x_{8}x_{9}+x_5x_6x_9x_{10}x_{11}x_{12}+x_5x_7x_9x_{10}x_{11}x_{12}$. By Lemma 4,  $wt(g_1)=192+2^5-2^5-2^4-2^6=112,$ and $wt(g_2) = 192+2^3-2^3-2^2-2^6=124$. By Lemma 7, $wt(g_1+g_2)= 256-2^6-2^1-2^2-2^2-2^3-2^6+2^2+ 2^{2} + 2^{3} + 2^{2} + 2^{3}- 2^3=126$. Thus, $wt(g) = 112+124+126 = 362.$
     \end{proof}
\end{lemma}
Following the example in Lemma 8, we compute the weights of all possible constructions that are allowed within the constraints of Construction 3. To carry this out, we developed a suitable computer algorithm. This computation was significantly more expensive than the one for Construction 2. Therefore, we used matrices and permutations to cut down on computational complexity. 

\begin{lemma}
    The set of all new weights, congruent with 2 modulo 4, is
    \[
    \begin{aligned}
    \{ &362, 382, 454, 458 , 462, 466, 470, 474, 478, 482, 486, 490, 494,\\
    &498, 502, 506, 510, 514, 518, 522, 526, 530, 534, 538, 542, 546\}.
    \end{aligned}
    \]
\end{lemma}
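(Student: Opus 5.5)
Lemma 9 claims that Construction 3 produces exactly this list of new weights $\equiv 2 \pmod 4$. I would prove it as a finite computational statement, in the same way Lemma 6 was obtained from Construction 2.

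\emph{Weight formula.} Write $g = 0\|g_1\|g_2\|(g_1+g_2)$ with $g_1 = f_I + f_J + f_K$ and $g_2 = f_I + f_{J'} + f_{K'}$, where $I=\{1,\dots,6\}$ and $J,K,J',K'$ are $6$-subsets of $\{1,\dots,12\}$. By Lemma 1, $g \in RM(7,14)$, and
\[
wt(g) = wt(g_1) + wt(g_2) + wt(g_1+g_2).
\]
Lemma 4 gives the first two terms from the intersection numbers of $(I,J,K)$ and of $(I,J',K')$. The third term involves $g_1+g_2 = f_J + f_K + f_{J'} + f_{K'}$. After relabelling variables so that $J$ plays the role of $\{1,\dots,6\}$, Lemma 7 applies to it, with the eleven intersection sizes of $J,K,J',K'$. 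When some monomials coincide or cancel in $g_1+g_2$, I would fall back on Lemma 2 or Lemma 4.

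\emph{Enumeration.} The weight therefore depends only on the intersection pattern of the five sets $I,J,K,J',K'$. I would exploit the symmetry group: permutations of $\{1,\dots,12\}$ fixing $I$ setwise, together with swapping $J \leftrightarrow K$, swapping $J' \leftrightarrow K'$, and swapping $g_1 \leftrightarrow g_2$. This lets me enumerate orbit representatives of the tuples $(J,K,J',K')$ instead of all $\binom{12}{6}^4$ tuples. For each representative I would evaluate the three weights and record those $\equiv 2 \pmod 4$. Removing the weights already found in Lemma 3, Lemma 6 and Propositions 1--2 leaves the set of new weights, which should be exactly the displayed list.

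\emph{Verification.} For the write-up, I would give an explicit witness $(J,K,J',K')$ for each listed weight, as Lemma 8 does for $362$, and check it by hand with Lemmas 4 and 7. That settles membership.

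The main obstacle is the converse: showing that no further new weight $\equiv 2 \pmod 4$ arises. This rests entirely on the exhaustive search being complete. I would need to argue that the orbit representatives cover every configuration, including degenerate ones with repeated monomials where the formula of Lemma 7 does not apply directly. As a cross-check I would also compute the weights directly from $2^{12}$-bit truth tables, independently of the closed formulas.
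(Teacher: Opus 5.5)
Your plan is essentially the paper's own argument. The paper supports Lemma 9 only by an exhaustive computer enumeration of Construction 3, evaluating $wt(g_1)+wt(g_2)+wt(g_1+g_2)$ via Lemmas 4 and 7 and using ``matrices and permutations'' to reduce the search, so in both versions the ``no further new weights'' direction rests entirely on that search being complete. Your explicit witnesses and independent truth-table check would make the claim more verifiable than the paper does. Also, the formula of Lemma 7 is an inclusion--exclusion identity that remains valid when some of $J,K,J',K'$ coincide, so the degenerate configurations need no separate fallback.
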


\begin{proposition}
    Combining the weights of Lemma 3, 6, and 9, the following set $S_{2mod4,1}$ containing weights congruent with 2 modulo 4, can be generated.
    
    $$S_{2mod4,1} = \{254\} \cup \{314 + 4i\ | \ 0 \leq i \leq 58, i \in \mathbb{Z}\} \setminus M, $$
    where $M = \{322, 326, 330, 334, 354\}$ is the set of missing weights.  
\end{proposition}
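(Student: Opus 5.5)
The plan is to verify the set identity by bookkeeping: take the union of the three lists of weights from Lemma 3, Lemma 6 and Lemma 9, and compare it with the arithmetic progression $\{314+4i \mid 0\le i\le 58\}$. Since every weight listed there is realized by a function $g\in RM(7,14)$ of the form in Construction 1, 2 or 3 (membership via Lemma 1), every element of the union is a weight of $RM(7,14)$ congruent to $2$ modulo $4$. What remains is to show that this union equals $\{254\}\cup\{314+4i\}\setminus M$.

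First I will record the endpoints. The progression runs from $314$ to $314+4\cdot 58=546$, which is exactly the largest element in Lemma 9. The isolated value $254$ comes from Lemma 5 and is listed in Lemma 6; it lies below $314$ and is separated from it (the values $258,\dots,310$ are not claimed), so it is kept as a separate singleton.

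Second, I will walk through the progression in consecutive blocks and assign each value to a lemma:
\begin{itemize}
\item $314, 342, 350$ from Lemma 3;
\item $318, 338, 346$ and the run $358,\dots,450$ from Lemma 6;
\item $362$ and $382$ from Lemma 9.
\end{itemize}
The run in Lemma 6 has step $4$, but it skips $362$ and $382$, which is exactly where Lemma 9 supplies them. Lemma 9 then contributes the run $454,\dots,546$ in steps of $4$, which continues Lemma 6's run past $450$ with no gap.

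Third, I will check that the residues not covered are precisely $322, 326, 330, 334, 354$, that is, that none of these appears in any of the three lists. This confirms $M$.

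The proof therefore reduces to a finite membership check. The only real obstacle is trusting the computational Lemmas 6 and 9. Those rest on exhaustive computer searches over index choices, evaluated with the closed formulas of Lemmas 2, 4 and 7. The proposition itself is elementary given them.

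One point deserves care. Read literally, the proposition should not be taken as asserting that the weights in $M$ are not in the spectrum of $RM(7,14)$. It only asserts that they are not produced by these constructions. I will state that explicitly, so that ``missing'' means ``not obtained from Constructions 1--3''.
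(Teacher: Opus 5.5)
Your proposal is correct and is the paper's own argument, which the paper leaves implicit: the proposition is just the union of the lists in Lemmas 3, 6 and 9, and your accounting checks out, since the three lists are pairwise disjoint with $3+26+26=55$ elements, namely $254$ together with the $59-5=54$ terms of $314,318,\dots,546$ lying outside $M$. One caveat on your closing remark: checking the lists only shows that $M$ is absent from these three lists, and Lemma 3 does not claim to exhaust Construction 1, so reading ``missing'' as ``not obtainable from Constructions 1--3'' needs one extra check for Construction 1; that check does succeed, because $126$ is the only value $\equiv 2 \pmod 4$ in Lemma 2's list, so such a weight below $378$ is $126$ plus two values from $\{0,64,96,112,120,124\}$, the differences $196,200,204,228$ are not such sums, and $334=126+96+112$ would need three $6$-sets with pairwise intersections $0,4,3$, forcing one of them to have at least $7$ elements.
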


\begin{lemma}
    ~\cite{carlet2024weight} The set of weights in RM(5,10) contains the weights $\{32, 48, 56, 60, 62, 64, 68\}$ and all even integers between 72 and 952. 
\end{lemma}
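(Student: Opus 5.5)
The most direct route is to quote the weight spectrum of $RM(m-5,m)$ from~\cite{carlet2024weight} and specialise it to $m=10$. In that case $RM(5,10)=RM(m-5,m)$, so the claimed list should simply be the $m=10$ case of Carlet's theorem. Since the lemma is cited rather than proved here, the proposal is only to indicate how one would re-derive the listed weights. It establishes the inclusion stated in the lemma, not the full spectrum.

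\textbf{Step 1: reduce to weights at most 512.} The all-ones function lies in $RM(5,10)$, so $w \mapsto 1024-w$ preserves the spectrum. It therefore suffices to obtain $32,48,56,60,62,64,68$ and every even $w$ with $72\le w\le 512$. The weights $512\le w\le 952$ then follow by complementation.

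\textbf{Step 2: the low weights.} These all lie below $2.5d=80$, where $d=32$, so they are exactly the region covered by Kasami, Tokura and Azumi~\cite{kasami1976weight}. I would still realise each one explicitly with two or three degree-5 monomials in 10 variables, counting as in Lemma 2 and Lemma 4 (with 5 and 10 in place of 6 and 12):
\begin{itemize}
\item A single monomial gives $32$.
\item Two monomials overlapping in $c$ variables give $64-2^{c+1}$, namely $48,56,60,62$ for $c=3,2,1,0$.
\item The weights $64$ and $68$ come from suitable three-monomial sums, using the analogue of Lemma 4.
\end{itemize}

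\textbf{Step 3: the even weights $72\le w\le 512$.} I would use concatenation in the style of Lemma 1:
$$f=u\,\|\,(u+v)=u+x_{10}v, \qquad u\in RM(5,9),\ v\in RM(4,9).$$
This $f$ has degree at most 5, and
$$wt(f)=wt(u)+wt(u+v).$$
The plan is as follows.
\begin{enumerate}
\item Take $u$ of small weight, say $0$, $32$, $48$ or $56$, with $u$ a sum of few monomials.
\item Choose $v$ so that $u+v$ runs through a dense set of weights. Here $u+v\in RM(5,9)$ ranges over a coset of $RM(4,9)$.
\item Alternatively, take $u=0$. Then $wt(f)=wt(v)$, so every weight of $RM(4,9)=RM(m-5,m)$ at $m=9$ is reused, and the argument becomes an induction on $m$.
\end{enumerate}
Even weights are automatic here: for $r<m$ every codeword of $RM(r,m)$ has even weight.

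\textbf{Main obstacle.} The hard part is the residue class $2 \bmod 4$ near the bottom of the range, roughly $72$ to $150$. By Ax/McEliece-type divisibility, low-degree pieces are highly divisible, so weights $\equiv 2 \pmod 4$ need genuine degree-5 interaction. They must also avoid the gaps at $66$ and $70$ that the lemma excludes, which signals that the lower edge is delicate.

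For this range I would first try Maiorana--McFarland functions $f(x,y)=x\cdot\phi(y)+\psi(y)$, with $x\in\mathbb{F}_2^{k}$ and $y\in\mathbb{F}_2^{10-k}$, where $\psi$ has degree 5. Their weight splits as a sum over $y$:
\begin{itemize}
\item each $y$ with $\phi(y)\neq 0$ contributes exactly $2^{k-1}$;
\item each $y$ with $\phi(y)=0$ contributes $0$ or $2^k$, according to $\psi(y)$.
\end{itemize}
This makes the weights $\equiv 2 \pmod 4$ tunable in steps of $2^{k-1}$, provided $\deg(x\cdot\phi)\le 5$, that is $\deg\phi\le 4$. I would finish by checking the remaining cases by a small computer enumeration, as the paper does for Lemma 6 and Lemma 9.
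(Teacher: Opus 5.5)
Your main route is the paper's. The lemma is not proved there; it is imported as the $m=10$ case of Carlet's determination of the weight spectrum of $RM(m-5,m)$ \cite{carlet2024weight}, and your checks are correct: $32,48,56,60,62$ from one or two monomials, $64$ and $68$ from three monomials (e.g.\ $x_1x_2x_3x_4x_5+x_1x_2x_3x_6x_7+x_1x_2x_8x_9x_{10}$ has weight $68$), and the reduction via $w\mapsto 1024-w$. Your Step~3 sketch could not replace the citation, though: taking $u=0$ only yields weights of $RM(4,9)$, all divisible by $4$ by McEliece's theorem, so there is no usable induction on $m$ below $m=2\cdot 5$, and the even weights from $72$ to $952$ (in particular those $\equiv 2 \bmod 4$) are established only by Carlet's result, not by your sketch.
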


\begin{lemma}
    ~\cite{lou2025determining} Let A = $\{0,64,96,112,120,124,126,128,136,144,148\}$. Then the set S of weights in RM(6,12) is 
    $$S = A \cup\{152+2i\}\cup\{2^{12}-a\ |\ a\in A\},$$
    where i ranges over the set of consecutive integers from 0 to $2^{12}-152$.
\end{lemma}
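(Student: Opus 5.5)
This is Lemma 11, the weight spectrum of $RM(6,12)$, which the paper quotes from Lou and Wang~\cite{lou2025determining}. Within this paper it is a cited result, so formally one just invokes it. If I were to prove it, I would split the claim into two halves. First, every listed value is attained by some codeword. Second, no other even value occurs. All weights of $RM(6,12)$ are even, since for $r<m$ every codeword has even weight.

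\textbf{Excluded values.} I would bound the low end with the Kasami--Tokura--Azumi description of weights below $2.5d$~\cite{kasami1976weight}, where $d=2^{6}=64$. It shows that the only weights in $(0,160)$ are of the form $2d-2^{j}$ or $2d+2^{j}$ type values, together with the two- and three-term patterns. The sub-$152$ weights are exactly $64,96,112,120,124,126,128,136,144,148$. In particular $150$ and every other even number below $152$ not in $A$ is excluded. Complementation $f\mapsto f+1$ keeps $RM(6,12)$ closed. Hence the high end is exactly $\{2^{12}-a \mid a\in A\}$, and the gaps near $2^{12}$ mirror those near $0$.

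\textbf{Attained values.} The set $A$ is realised by sums of at most three degree-6 monomials. This follows from Lemma 2 (which gives $0,64,96,\dots,126$) and the three-monomial list after Lemma 4 (which gives $128,136,144,148$). For the middle range I would use concatenation recursively, as in Lemma 1. Write $g=g_0\|g_1$ with $g_0\in RM(6,11)$ and $g_0+g_1\in RM(5,11)$; then $g\in RM(6,12)$ and $wt(g)=wt(g_0)+wt(g_1)$. By induction on $m$, the spectra of $RM(m-6,m)$ and $RM(m-5,m)$ are known. The latter is Carlet's result (compare Lemma 10 for $RM(5,10)$). Using these, I would cover every even number in $[152,2^{11}]$ by sums of one weight from each half. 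The remaining range $[2^{11},2^{12}-152]$ then follows by complementation.

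\textbf{Main obstacle.} I expect the hard part to be the values just above $152$, roughly $152$ to $250$. There the available small weights are sparse, and many individual residues need explicit constructions, e.g. four-monomial sums as in Lemma 7. The spectrum listed after Lemma 7 already supplies $152,154,\dots,220$ except a few values, and those need ad hoc constructions. I would handle this window by computer search over few-monomial functions, in the same way as Lemmas 6 and 9, and cover the rest by the additive concatenation argument.
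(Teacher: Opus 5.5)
The paper gives no proof of this lemma (it is quoted from Lou and Wang), so your sketch has to stand on its own. The exclusion half is fine as a citation. Evenness, Kasami--Tokura on $[64,128)$, Kasami--Tokura--Azumi on $(128,160)$ and complementation pin down everything outside $A\cup\{152,154,\dots,2^{12}-152\}$. (The printed range of $i$ should end at $2^{11}-152$, which is what you implicitly use.)

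The gap is in attaining the middle range. In $g=g_0\|g_1$ with $g_0\in RM(6,11)$ and $v=g_0+g_1\in RM(5,11)$, the halves are coupled: $g_1=g_0+v$ has the same degree-$6$ part as $g_0$. So you cannot add an arbitrary weight of one half to an arbitrary weight of the other. Taken literally, the additive step is false. It would produce $80=32+48$ (two weights of $RM(6,11)$), or $132=68+64$ (a weight of $RM(6,11)$ plus one of $RM(5,11)$), as weights of $RM(6,12)$. That contradicts the very Kasami--Tokura(--Azumi) exclusions you use in the first half, so the step cannot certify the values in $[152,2^{11}]$ either. Independent choices are legitimate only inside a single coset $P+RM(5,11)$ with fixed degree-$6$ part $P$, and Carlet's theorem says nothing about the weight sets of such cosets. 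The appeal to induction also fails. $RM(5,11)$ is $RM(m-6,m)$ with $m=11<2\cdot 6$, which nothing you cite covers, and $RM(6,12)$ is itself the base case of the $RM(m-6,m)$ family, so there is nothing to induct from.

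What survives is the case $v=0$. Then $g$ does not depend on $x_{12}$, $wt(g)=2\,wt(g_0)$, and Carlet's spectrum of $RM(6,11)$ gives every multiple of $4$ in $[144,3952]$. The real difficulty is the weights $\equiv 2 \pmod 4$, throughout $[152,2^{11}]$ and not only in the window $152$--$250$ you single out. By McEliece's theorem $4\mid wt(v)$, since $\lceil 11/5\rceil-1=2$. Hence
\[ wt(g)=2\,wt(g_0)+wt(v)-2\,wt(g_0v)\equiv 2\,wt\bigl(g_0(1+v)\bigr) \pmod 4, \]
so every such weight requires an engineered overlap between $g_0$ and $v$. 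In particular, taking both halves in $RM(5,11)$ (the case $P=0$, where your additive argument would be valid) yields multiples of $4$ only.

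To close the gap you need a decomposition with a genuinely free summand. One choice is the $RM(6,12)$ analogue of Lemma 1: $g=g_0\|g_1\|g_2\|(g_1+g_2+g_3)$ with $g_0,g_3\in RM(4,10)$ and $g_1,g_2\in RM(5,10)$. There $wt(g)=wt(g_0)+wt(0\|g_1\|g_2\|(g_1+g_2+g_3))$ is an honest sum with $g_0$ free. You can then build a run of explicit $2 \bmod 4$ seed weights from few-monomial $g_1,g_2$ (computed as in Lemmas 2 and 4) and translate them by the available values of $wt(g_0)$ to sweep the range. This is what the paper does one level up for $RM(7,14)$ (Lemma 1, Constructions 1--3 and Proposition 4), following Lou and Wang.
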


\begin{proposition} 
Let $W_0 =\{0,128,192,224,240,248,252,254,256,272,288,296,304,308\}$ and $W_f$ be the weight spectrum of $RM(7,14)$.Then
$$W_0\cup\{312+2i\}\cup\{2^{14}- w\ | \ w \in W\}\setminus M \subseteq W_f,$$
where i ranges over the set of consecutive integers from 0 to $2^{13}-312$ and the set of missing weights $M = \{322, 326, 330, 334, 354, 8174, 8178, 8182, 8186, 8188, 8190, 8194,8196, 8198, 8202, \\ 8206, 8210, 16030,16050, 16054, 16058, 16062\}.$
    \begin{proof}
    Consider the functions of the form $g = g_0||g_1||g_2||(g_1+g_2+g_3)$, where $g_0, g_3\in RM(5,12)$ and $g_1,g_2\in RM(6,12)$. By Lemma 1, $g \in RM(7,14)$. Clearly, 
    $$wt(g) = wt(g_0)+wt(0||g_1||g_2||(g_1+g_2+g_3)).$$
    Let $h\in RM(5,10)$ and let $g_0$ be equal to $h$ within $RM (5,12)$. Then
    $$wt(g_0)=4wt(h).$$
    By Lemma 10, $wt(g_0)$ can achieve the weights in set $C$, 
    $$C = \{128, 192, 224, 240, 248, 256, 272\} \cup \{288+8i\},$$
    where $i$ ranges from 0 to 440. 
    \\
    \\
    \textbf{The weights 2mod4}
    \\
    Combining the sets from Proposition 1 and 3 with the set C as described above, we can achieve the following set of weights congruent with 2 modulo 4.
    $$S_{2mod4}=S_{2mod4,1}\cup\{s_1+c\ | \ s_1 \in S_{2mod4,1}, c \in C \} \cup S_{2mod4,2} \cup \{s_2+c\ | \ s_2 \in S_{2mod4,2}, c \in C\} $$
    $$S_{2mod4} = \{254\}\cup \{314 + 4i \ | \ 0\leq i \leq 1969, i \in \mathbb{Z} \}\setminus M'$$
    with the set of missing weights $M' =  \{322, 326, 330, 334, 354, 8174, 8178, 8182, 8186, 8190\}.$
    \\
    \\
    The set $S_{2mod4}$ can then be expanded by $wt(g + 1) = 2^{14}-wt(g)$ for $g \in RM(7, 14)$.
    $$S_{2mod4} = {254} \cup \{314 + 4i\} \cup \{2^{14}-254\} \setminus M,$$
    where $i$ ranges over the set of consecutive integers from 0 to $2^{12}-157$ and the set of missing weights $M=M' \cup \{2^{14}-M'\}.$
    \\
    \\
    \textbf{The weights 0mod4}
    \\
    If $f \in RM(6,12)$, then $g= 0||0||f||f \in RM(7,14)$ and $wt(g) = 2wt(f)$. Therefore, by Lemma 11, the following set $S_{0mod4,1}$ is also included:
    \\
    \\
    Let $B =\{0,128,192,224,240,248,252,256,272,288,296\}$, then 
    $$S_{0mod4,1} = B \cup \{304+4i\} \cup \{2^{13}-b\ |\ b\in B\},$$
    where i ranges over the set of consecutive integers from 0 to $2^{11}-152$. 
    \\
    \\
    Furthermore, the two weights from Proposition 2 can be combined with the set C in the same way as showed for the weights 2 mod 4. 
    $$ \{s + c \ | \ s \in \{4372, 4376\}, c \in C\}$$
    The critical weights from this set are included in $S_{0mod4,2}$.
    $$S_{0mod4,2} = \{7892 + 4i\ | \ 0\leq i\leq 73, i \in \mathbb{Z}\}$$
    Conclusively, we put $S_{0mod4,1}$ and $S_{0mod4,2}$ together.
        $$S_{0mod4}=S_{0mod4,1}\cup S_{0mod4,2}$$
        $$S_{0mod4} = B \cup \{304+4i\}$$
        where i ranges over the set of consecutive integers from 0 to $2^{11}-76.$ 
        Missing the weight 8188. 
    \\
    The set $S_{0mod4}$ can also be expanded by $wt(g + 1) = 2^{14}-wt(g)$ for $g \in RM(7, 14)$.
    $$S_{0mod4} = B \cup \{304+4i\} \cup \{2^{14}-b\ |\ b\in B\}$$
    where i ranges over the set of consecutive integers from 0 to $2^{12}-152$.
    Missing the weights 8188 and 8196. 
    \\
    \\
    \textbf{The final set}
    \\
    Ultimately, combining the sets $S_{0mod4}$ and $S_{2mod4}$ we get a subset of the weight spectrum of $RM(7,14).$
    $$S_{0mod4} \cup S_{2mod4}\subseteq W_f$$
    $$W_0\cup\{312+2i\}\cup\{2^{14}- w\ | \ w \in W_0\}\setminus M \subseteq W_f,$$
    where $i$ ranges over the set of consecutive integers from 0 to $2^{13}-312$\\
    and $W_0 = \{0,128,192,224,240,248,252,254,256,272,288,296,304,308\}.$ The list of missing weights $M$ is as mentioned above.
    \end{proof} 
\end{proposition} 
We are unsure if Proposition 4 is the total weight spectrum of $RM(7,14)$, given the conjecture from~\cite{carlet2024weight} about the weight spectrum of $RM(m-c,m)$. 

\begin{conjecture}
(Open question in ~\cite{carlet2024weight},~\cite{lou2025determining}) Let $c$ be any positive integer. For $m \geq 2c$, is the weight spectrum of $RM(m-c,m)$ of the form: 

$$ \{0\} \cup A \cup B \cup C \cup \overline{B} \cup \overline{A} \cup \{2^m\}$$

where, 
\begin{itemize}
    \item $A \subseteq [2^c, 2^{c+1}]$, is given by Kasami and Tokura~\cite{kasami1970weight},
    \item $B \subseteq [2^{c+1}, 2^{c+1} + 2^{c-1}]$, is given by Kasami, Tokura, and Azumi in~\cite{kasami1976weight},
    \item $C \subseteq [2^{c+1} + 2^{c-1}, 2^m - 2^{c+1} - 2^{c-1}]$, consists of all consecutive even integers,
    \item $\overline{A}$ stands for the complement to $2^m$ of $A$, and $\overline{B}$ stands for the complement to $2^m$ of $B$.
\end{itemize}
\end{conjecture}

\subsection*{4 Conclusion}
In this paper, we use the concatenation technique of Lou and Wang in an attempt to determine the weight spectrum of the Reed-Muller code $RM(7,14)$. Achieving this spectrum, is a crucial step in finding the weight spectrum of the Reed-Muller codes $RM(m-7,m)$. We adapted the functions in the concatenation technique by one degree and two variables. Furthermore, we extended the construction to the concatenation of 4-term Boolean functions. In the process, we derived a formula to determine the weight of a 4-term Boolean function in $RM(6,12)$. 

To satisfy Conjecture 1 for $m = 14$, we are missing a small number of weights in the spectrum of $RM(7,14)$. We were not able to determine the missing weights by using the concatenation technique used in this paper. Therefore, a different construction needs to be applied. Given the missing weights we will not generalize to $RM(m-7,m)$, because of the implied missing weights in larger Reed-Muller code spaces.
\\
\\
\textbf{Remark.}
Because of the way the weights are constructed, the list of missing weights $M$ could be achieved by finding the following 8 weights: 
$$\{322,326,330,334,354, 4378, 4380, 4382\}.$$

\renewcommand{\refname}{\subsection*{References}\vspace{-1.5ex}}

\end{document}